\newcommand{\bb}{\mathbf}
\theoremstyle{definition}
\newtheorem{definition}{Definition}
\newtheorem{proposition}{Proposition}
\newtheorem{example}{Example}
\begin{document}
%
% paper title
% Titles are generally capitalized except for words such as a, an, and, as,
% at, but, by, for, in, nor, of, on, or, the, to and up, which are usually
% not capitalized unless they are the first or last word of the title.
% Linebreaks \\ can be used within to get better formatting as desired.
% Do not put math or special symbols in the title.
\title{An Image Source Method Framework \\for Arbitrary Reflecting Boundaries}
%
%
% author names and IEEE memberships
% note positions of commas and nonbreaking spaces ( ~ ) LaTeX will not break
% a structure at a ~ so this keeps an author's name from being broken across
% two lines.
% use \thanks{} to gain access to the first footnote area
% a separate \thanks must be used for each paragraph as LaTeX2e's \thanks
% was not built to handle multiple paragraphs
%

\author{Pierre~Quinton,~%\IEEEmembership{Student~Member,~IEEE,}
        Pablo~Mart\'inez-Nuevo,~%\IEEEmembership{Member,~IEEE,}
        and~Martin~B.~M{\o}ller~%\IEEEmembership{Student~Member,~IEEE~}% <-this % stops a space
\thanks{Pierre Quinton is with the School of Computer and Communication Sciences, \'Ecole Polytechnique F\'ed\'erale de Lausanne, 1015 Lausanne, Switzerland (e-mail: pierre.quinton@epfl.ch).

P. Mart\'inez-Nuevo and M. B. M{\o}ller are with the research department at Bang \&~Olufsen, 7600 Struer, Denmark (e-mail: pmn@bang-olufsen.dk; mim@bang-olufsen.dk)}% <-this % stops a space
\thanks{}% <-this % stops a space
\thanks{}}

% The paper headers
\markboth{}%Transactions on Audio, Speech, and Language Processing}%
{Shell \MakeLowercase{\textit{et al.}}: Bare Demo of IEEEtran.cls for IEEE Journals}
% The only time the second header will appear is for the odd numbered pages
% after the title page when using the twoside option.

% make the title area
\maketitle

% As a general rule, do not put math, special symbols or citations
% in the abstract or keywords. 150-250 words
\begin{abstract}
We propose a theoretical framework for the image source method that generalizes to arbitrary reflecting boundaries, e.g. boundaries that are curved or even with certain openings. Furthermore, it can seamlessly incorporate boundary absorption, source directivity, and nonspecular reflections. This framework is based on the notion of \textit{reflection paths} that allows the introduction of the concepts of \textit{validity} and \textit{visibility} of virtual sources. These definitions facilitate the determination, for a given source and receiver location, of the distribution of virtual sources that explain the boundary effects of a wide range of reflecting surfaces. The structure of the set of virtual sources is then more general than just punctual virtual sources. Due to this more diverse configuration of image sources, we represent the room impulse response as an integral involving the temporal excitation signal against a measure determined by the source and receiver locations, and the original boundary. The latter smoothly enables, in an analytically tractable manner, the incorporation of more general boundary shapes as well as directivity of sources and boundary absorption while, at the same time, maintaining the conceptual benefits of the image source method.
\end{abstract}

% Note that keywords are not normally used for peerreview papers.
\begin{IEEEkeywords}
Image source model, room impulse response, geometrical acoustics, room acoustics
\end{IEEEkeywords}

\IEEEpeerreviewmaketitle

\section{Introduction}
\IEEEPARstart{T}{he} behavior of waves in enclosures can be modeled by solving the wave equation subject to the appropriate boundary conditions. For very simple geometries and boundary conditions, it is possible to express the solution analytically in an explicit manner \cite{Morse:1986aa,Kuttruff:2016aa}. However, in more complex scenarios, this model becomes more cumbersome from an analytical and practical point of view. In room acoustics, for example, it is not very beneficial when the room shape is not of very simple geometry or the walls are nonrigid \cite{Jacobsen:2013aa}.

% The behavior of waves in enclosures can be modeled by solving the wave equation subject to the appropriate boundary conditions. For very simple geometries and boundary conditions, it is possible to express the solution analytically through an eigenfunction expansion \cite{morse_ingaard, kuttruff}. However, in more complex scenarios this model becomes more cumbersome from an analytical and practical point of view. In room acoustics, for example, it is not very beneficial when the room shape is not of very simple geometry or the walls are nonrigid \cite{1}. 
% Numerical methods seek to solve this problem by discretizing the environment and time or frequency – response using methods like finite difference time-domain, boundary element method, finite element method, or spectral element method \cite{pind_2019}. The downside of such methods is the computational complexity and memory requirement as the frequency increases and the associated discretization elements becomes comparably smaller.

Under physically meaningful assumptions, it is possible to use a simpler model where the concept of a sound path or sound ray is used instead of that of a wave \cite{Allen:1979aa,Kuttruff:2016aa}.\footnote{If the interaction between reflections is not of interest, as is the case when modeling e.g. reverberation time, \textit{energy-based} models like ray tracing can be applied \cite{Savioja:2015}. In the current work, the scope is the \textit{pressure-based} image source method.} Similar approaches are used in geometrical optics \cite{Pedrotti:2017}. Then, the behavior of sound rays in a closed room emitted by a given source and reflected off the corresponding surfaces can be described by the concept of image---or virtual---sources. This approach has been extensively used for solving partial differential equations \cite{Sommerfeld:1949ab}. In one of its versions, given a punctual and omnidirectional source in a room, the boundary effects are described by the associated set of virtual sources. 

% Under physically meaningful assumptions, it is possible to use a simpler model where the concept of a sound path or sound ray is used instead of that of a wave \cite{allen_1979, kuttruff}.\footnote{If only the statistical properties of the room is of interest, as is the case for modelling reverberation, models like ray tracing and particle tracing can be applied \cite{}. In the current work, the scope is the phase preserving image source method.} Similar approaches are used in geometrical optics \cite{pedrotti^3}. The behavior of sound rays in a closed room emitted by a given source and reflected off the corresponding surfaces can be described by the concept of image – or virtual – sources. This approach has been extensively used for solving partial differential equations \cite{summerfeld_1949}. In the special case of the image source method, given a punctual and omnidirectional source in a room, the boundary effects are described by the associated set of virtual sources.

The image source method provides a less abstract description that has enabled a more tractable theoretical analysis and more efficient simulations. For example, it is used in \cite{Ajdler:2006aa} in order to derive theoretical guarantees about sampling density and reconstruction of sound fields in simple enclosures. In \cite{Mignot:2013aa}, sparse recovery of the early part of the room impulse response is also based on the concept of virtual sources. Additionally, it has been utilized as a model, in an explicit \cite{Dokmanic:2011aa} and implicit manner \cite{Ribeiro:2011aa}, in order to infer certain room shapes.

In spite of its wide applicability both as a theoretical and computational model, it has not been fully extended to arbitrary reflecting boundaries. In \cite{Keller:1953aa}, it is studied the applicability of the image method for some polygons. The first instance of the image source method within the context of room acoustics was derived for rectangular rooms \cite{Allen:1979aa}. It was also shown therein that, given a punctual omnidirectional source in a rigid-wall rectangular room, the modal solution was the same as the one derived under the image source method. This model was later extended to polygonal rooms in \cite{Borish:1984aa} where the concepts of \textit{visibility} and \textit{validity} of virtual sources were introduced in order to accommodate the more complex derivation of the virtual sources. However, the algorithm proposed in \cite{Borish:1984aa} to obtain the set of virtual sources does not extend, for example, to rooms that are not closed---i.e. presenting wall-size openings to free space---, or those consisting of curved walls.

In this paper, we develop a framework based on the image-source model for the analysis of enclosures with arbitrary boundaries that also models absorption and directivity of sources. We introduce the concept of \textit{reflection paths} associated with virtual sources. By further redefining previous notions of \textit{visibility} and \textit{validity}, we are able to extend the derivation of virtual sources to arbitrary rooms---e.g. open or with curved walls---with possibly nonspecular reflections. The distribution of virtual sources is then no longer restricted to a discrete set of points and can consist of more complex structures, e.g. continuous contours in $\mathbb{R}^3$. In order to accommodate this potential heterogeneity, we also establish the foundations of a theoretical framework for expressing, in a closed and explicit manner, the room impulse response as an integral against an appropriately defined measure determined by the virtual sources distribution.

In Section \ref{sec:model}, we introduce, from first principles, sufficient definitions for a more general image source model. The main emphasis is to focus on \textit{reflection paths} in order to redefine \textit{visibility} and \textit{validity} of virtual sources. We still use the concept of virtual sources associated with a particular reflection path. We also show why previous methods cannot handle more general cases. A more familiar example of a room with planar walls is also presented to illustrate this framework. Section \ref{sec:RIRmeasure} describes the approach to obtain the room impulse response from a measure-theoretic point of view given the distribution of virtual sources.

We introduce here some notation that we will be using throughout the paper. Let $\mathcal{H}$ be a Hilbert space with inner product $\langle\cdot{,}\cdot\rangle$. We will often use $\mathcal{H}=\mathbb{R}^n$ with the usual inner product. The norm induced by this inner product is then given by $\lVert \bb{u} \rVert = \sqrt{\langle \bb{u} , \bb{u} \rangle}$ for all $\bb{u}\in\mathcal{H}$. The indicator function $\bb{1}_\mathcal{A}(\cdot)$ evaluates to $1$ if the argument is in the set $\mathcal{A}$ and $0$ otherwise. The nonnegative integers are denoted by $\mathbb{N}=\{0,1,2,\ldots\}$. For $i,j\in\mathbb{N}$ satisfying $i\leq j$, we denote the set of integers between $i$ and $j$ by $[i{:}j]=\lbrace k\in \mathbb{N} \mid i\leq k \leq j\textrm{ for }i,j\in\mathbb{N}\rbrace$.

% {\color{blue}
% Describe what this model could add to the image source method framework, for example absorption, refraction, non perfect walls, dense air.

% Possible uses: convenient for efficiently simulating room acoustics, analytically tractable since we can superimpose free-field responses, e.g. plenacoustic function, room shape inference \cite{Dokmanic:2011aa} \cite{Ajdler:2006aa}

% I think that the idea I want to convey is more about the generalization part. I would like that to be a framework to help proving that an algorithm is actually correct, or converge to the right solution, there will probably not be much emphases on practice, except a proof that Borish method is correct for convex polyhedral rooms.

% Kutruff considers the image source method as part of geometrical acoustics. I think since we cannot prove that the solution to the wave equation is the same with this virtual sources, we should go for a sound ray interpretation. Then, we can say that we know that for a rectangular room is the same as the modal solution.

% law of reflection formalized as valid reflection paths. Visibility for generalization.
% }

% {\color{red} Botish's focuses on validity and visibility of virtual sources; we focus on reflection paths and explaining a reflection path with a virtual source. Do we really need Dokmanics ref?}

\section{Framework for the Image Source Method}
\label{sec:model}
Within the context of wave propagation and under appropriate conditions, it is possible to explain reflection off boundaries of spherical waves in a geometrical way. This approach interprets wave fronts as rays originating at a certain point which we refer to as a \textit{source}. Reflection is usually modeled with the use of a vector associated with each point in the boundary. Under appropriate regularity conditions, this vector is chosen as the normal vector to the boundary. Loosely speaking, the \textit{law of reflection} then states that at the interface between two different media a ray reflects off the boundary in such a way that the angle of the incident ray equals the angle of the reflected ray when both angles are considered with respect to the normal to the boundary. Additionally, incident and reflected rays belong to the same plane \cite[Chapter 4]{Kuttruff:2016aa}. 

The main components of the law of reflection are thus the notion of a boundary and a normal vector. However, it is possible to generalize the definition of reflection by assigning an arbitrary vector to a point in the boundary. The incident and reflected angles are then considered with respect to this vector. Note that the latter does not necessarily have to be normal to the boundary. This definition allows us to circumvent restrictive regularity conditions at the boundary that are required for the existence of normal vectors.

Let us now formalize the concepts stated above. In particular, we define a boundary in a Hilbert space $\mathcal{H}$ as a set of points $\mathcal{B}\subseteq\mathcal{H}$. Then, reflections are explained with respect to a vector field determined by $\mathcal{B}$ in the following manner
\begin{equation}
\label{eq:BoundaryField}
\begin{split}
\bb{n}_\mathcal{B}:\ &\mathcal{H}\rightarrow \mathcal{H}\\
& \mathbf{u}\mapsto\mathbf{n}_\mathcal{B}(\bb{u})
\end{split}
\end{equation}
where $||\bb{n}_\mathcal{B}(\bb{u})||=\bb{1}_\mathcal{B}(\bb{u})$ for any $\bb{u}\in\mathcal{H}$. In other words, for some $\bb{v}\in\mathcal{H}$, $||\bb{n}_{\mathcal{B}}(\bb{v})||\neq0$ if and only if $\bb{v}$ is a point belonging to the boundary $\mathcal{B}$. For the sake of familiarity with the Euclidean space---i.e. consider for now that $\mathcal{H}=\mathbb{R}^n$---, it is possible to think of $\bb{n}_\mathcal{B}(\bb{u})$ as a normal vector to a hyperplane $\mathcal{S}=\{\bb{v}\in\mathbb{R}^n:\bb{n}_\mathcal{B}(\bb{u})^T(\bb{u}-\bb{v})=0\}$ where $\bb{u}\in\mathcal{S}$. This hyperplane then describes a virtual reflecting plane where the specular reflection at $\bb{u}$ takes place. Such an approach allows us to define reflections even for isolated single points (see Fig.~\ref{fig:VectorField}).

\begin{figure}[!ht]
\centering
\includegraphics[width=0.7\linewidth]{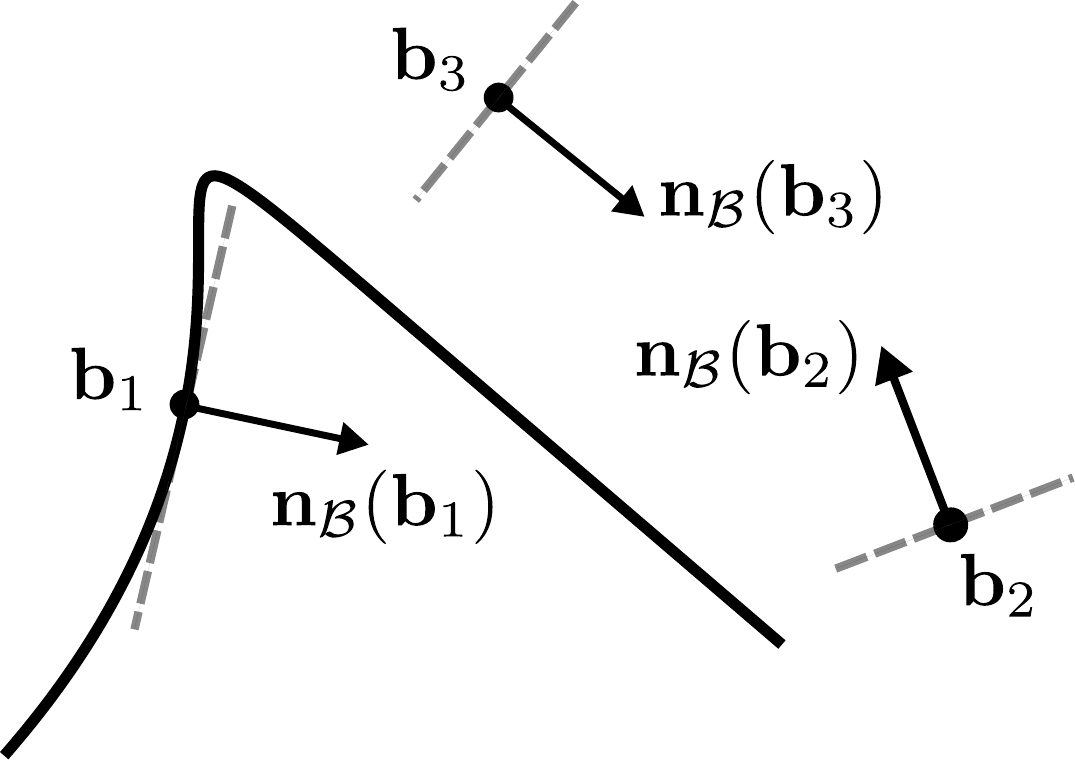}
\caption{Example of the vector field $\bb{n}_\mathcal{B}$ for $\mathcal{H}=\mathbb{R}^2$ where $\mathcal{B}$ consists of the solid curve and the points $\bb{b}_2$ and $\bb{b}_3$. The dashed lines represent the different lines corresponding to the interpretation of $\bb{n}_\mathcal{B}(\bb{b}_2)$ and $\bb{n}_\mathcal{B}(\bb{b}_3)$ as vectors orthonormal to the corresponding hyperplanes. Assuming the appropriate regularity conditions, $\bb{n}_\mathcal{B}(\bb{b}_1)$ can be chosen as a unit vector perpendicular to the tangent line of the curve at $\bb{b}_1$.}
\label{fig:VectorField}
\end{figure}

In our scenario, omnidirectional rays then originate at the source and are reflected from the boundary in a specular way according to $\bb{n}_{\mathcal{B}}$. Considering initially a single incident ray, the reflected ray can be explained by placing another source behind the boundary which is referred to as \textit{virtual} source or \textit{image} source \cite{Allen:1979aa,Borish:1984aa}. In particular, the location of the virtual source is formally given in the following definition by means of the notion of symmetric projection.
\begin{definition}
\label{def:VirtualSourceLocation}
Consider a boundary $\mathcal{B}\subseteq\mathcal{H}$. The symmetric projection of a vector $\bb{u}\in\mathcal{H}$ with respect to $\bb{v}\in\mathcal{B}$ is defined as follows
\begin{align}
\label{eq:SymmProjection}
P_{\bb{v}}(\bb{u}) := \bb{u} - 2 \langle \bb{u}-\bb{v} , \bb{n}_{\mathcal B}(\bb{v}) \rangle \bb{n}_{\mathcal B}(\bb{v})
\end{align}
\end{definition}
In other words, the position of the virtual source that explains the reflection of a ray at the point $\bb{v}$ is given by the symmetric projection $P_{\bb{v}}(\bb{u})$. Fig.~\ref{fig:Projection} illustrates this effect where, for illustration purposes, we only consider a single ray of an omnidirectional source. Note that it follows from Definition \ref{def:VirtualSourceLocation} that the distance between the source $\bb{u}$ and the reflection point $\bb{v}$ is the same as from the virtual source $P_{\bb{v}}(\bb{u})$ to $\bb{v}$, i.e. $||\bb{u}-\bb{v}||=||P_\bb{v}(\bb{u})-\bb{v}||$. Notice also that the symmetric projection is independent of the sign of the vector $\bb{n}_{\mathcal B}(\bb{v})$, i.e. we do not concern ourselves with any convention regarding inward- or outward-pointing vectors.

By generating these virtual sources, it is then possible to entirely replace the boundary by a set of virtual sources modeling the reflections. Depending on the geometry of the boundary, the distribution of sources takes different forms. For example, in a rectangular enclosure many of the reflections are explained by virtual sources that coincide in a single point \cite{Allen:1979aa}. However, as can be inferred from Fig.~\ref{fig:Projection}, the location of virtual sources for curved boundaries may lie in a continuous path. We formalize these notions in Appendix \ref{app:GeomResults}.

\begin{figure}[!ht]
\centering
\includegraphics[width=0.7\linewidth]{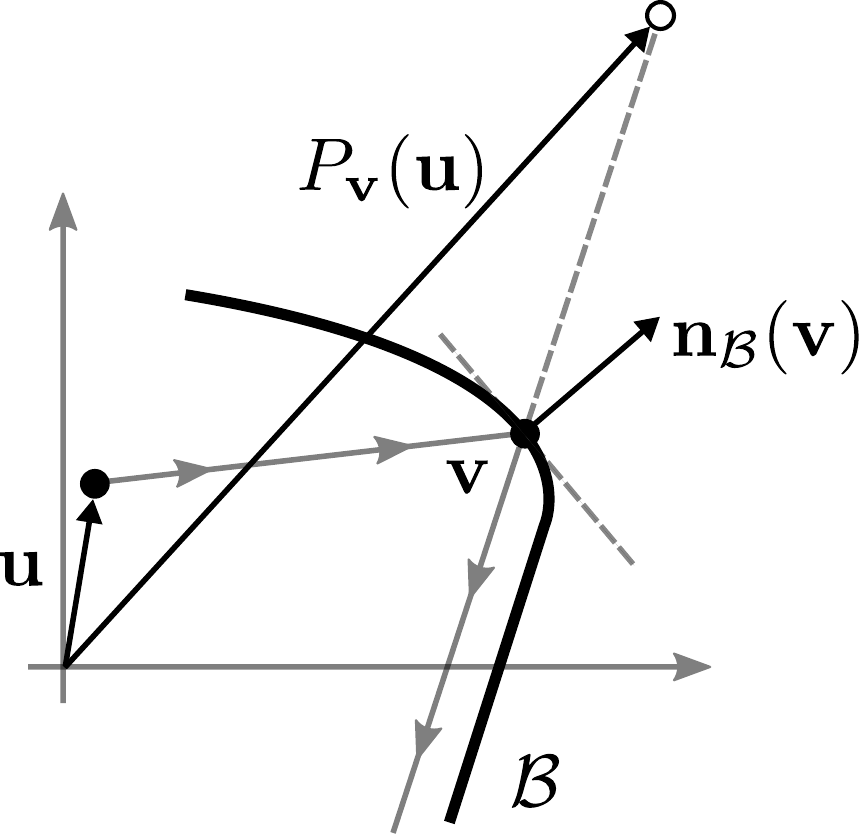}
\caption{An omnidirectional source is placed at $\bb{u}$ together with a solid curve $\mathcal{B}$ representing the boundary. The ray reflected at point $\bb{v}\in\mathcal{B}$ can be explained by a virtual source located at $P_{\bb{v}}(\bb{u})$ as depicted in the figure. The reflection is given with respect to the vector field $\bb{n}_{\mathcal{B}}(\cdot)$ which in this case assigns an outward-pointing unit vector normal to the curve at $\bb{v}$.}
\label{fig:Projection}
\end{figure}

It will be useful for our analysis later to introduce how we refer to a ray that has been reflected off different boundary points as we describe in Definition \ref{def:ReflectionPath}. For this definition, we temporarily drop any consideration about the law of reflection and simply consider a reflection path as a set of ordered points consisting of the locations where the ray is generated, where it is reflected, and where it is observed. We refer to this observation point as the \textit{sink}. 

\begin{definition}
\label{def:ReflectionPath}
Consider a boundary $\mathcal{B}\subseteq\mathcal{H}$. For $i\geq 0$ and distinct $\bb{y}_0, \dots, \bb{y}_{i+1} \in \mathcal{H}$, $(\bb{y}_0, \dots , \bb{y}_{i+1})$ is called a reflection path if $\bb{y}_j\in\mathcal{B}$ for any $j\in [1{:}i]$. The vectors $\bb{y}_1$, \dots , $\bb{y}_i$ are referred to as the reflection points and $\bb{y}_0, \bb{y}_{i+1}$ correspond to the source and sink respectively.
\end{definition}

Fig.~\ref{fig:Validity} shows two examples of reflection paths. It is important to emphasize that Definition \ref{def:ReflectionPath} includes reflection paths that do not conform with the law of reflection that we informally stated above. Thus, we introduce in the next definition the notion of a \textit{valid} reflection path. Essentially, valid reflection paths is our approach to formalizing the law of reflection.

\begin{definition}
\label{def:validity}
A reflection path $({\bb y}_0, \dots , {\bb y}_{i+1})$ is said to be valid if for any $j\in[1{:}i]$
\begin{align}
\label{eq:LawReflection}
\frac{{\bb y}_{j+1}-{\bb y}_{j}}{\lVert {\bb y}_{j+1}-{\bb y}_{j}\rVert} = \frac{{\bb y}_{j}-P_{\bb{y}_j}(\bb{y}_{j-1})}{\lVert {\bb y}_{j}-P_{\bb{y}_j}(\bb{y}_{j-1})\rVert}.
\end{align}
\end{definition}
Thus, Definition \ref{def:validity} guarantees that valid reflection paths consists of incident and reflected rays that form the same angle with respect to the hypersurface associated with the vector field at the reflection point. In order to see that, note that a valid reflection path $({\bb y}_0, \dots , {\bb y}_{i+1})$ is constructed such that the incident ray at $\bb{y}_j$ is in the the direction of $\bb{y}_j-\bb{y}_{j-1}$ and the reflected ray in $\bb{y}_{j+1}-\bb{y}_{j}$. These vectors both form the same angle $\theta_j$ with respect to $\bb{n}_{\mathcal{B}}(\bb{y}_j)$. In other words, we have that
\begin{equation}
\begin{split}
\cos\theta_j&=\Big\langle\frac{\bb{y}_j-\bb{y}_{j-1}}{||\bb{y}_j-\bb{y}_{j-1}||},\bb{n}_\mathcal{B}(\bb{y}_j)\Big\rangle\\
&=\Big\langle\frac{\bb{y}_{j+1}-\bb{y}_{j}}{||\bb{y}_{j+1}-\bb{y}_{j}||},-\bb{n}_{\mathcal{B}}(\bb{y}_j)\Big\rangle
\end{split}
\end{equation}
which is shown in Proposition \ref{prop:AngleValidity} in Appendix \ref{app:GeomResults}. Moreover, it remains to show that $\bb y_{j+1}-\bb y_j$, $\bb y_j - \bb y_{j-1}$, and $\bb n_{\mathcal B}(\bb y_j)$ belong to the same plane. By using Definition \ref{def:VirtualSourceLocation} and \ref{def:validity}, it is straightforward to see that $\bb y_{j+1}-\bb y_j$ results from a linear combination of $\bb y_j - \bb y_{j-1}$ and $\bb n_{\mathcal B}(\bb y_j)$, i.e.
\begin{equation}
\alpha({\bb y}_{j+1}-{\bb y}_{j})=({\bb y}_{j}- \bb{y}_{j-1})+\beta \bb{n}_{\mathcal B}(\bb{y}_j)
\end{equation}
where $\alpha=\lVert {\bb y}_{j}-P_{\bb{y}_j}(\bb{y}_{j-1})\rVert/\lVert {\bb y}_{j+1}-{\bb y}_{j}\rVert$ and $\beta=2 \langle \bb{y}_{j-1}-\bb{y}_j , \bb{n}_{\mathcal B}(\bb{y}_j) \rangle $. Thus, we formalize the law of reflection by introducing the equivalent notion of valid reflection paths.

Fig.~\ref{fig:Validity} shows a valid reflection path when the vector field is chosen to be orthonormal to the boundaries. Notice that it is always possible to choose $\bb{n}_{\mathcal{B}}$ differently so that $(\bb{y}_0,\bb{x}_1,\bb{x}_2,\bb{y}_3)$ is a valid reflection path.

\begin{figure}[!ht]
\centering
\includegraphics[width=0.9\linewidth]{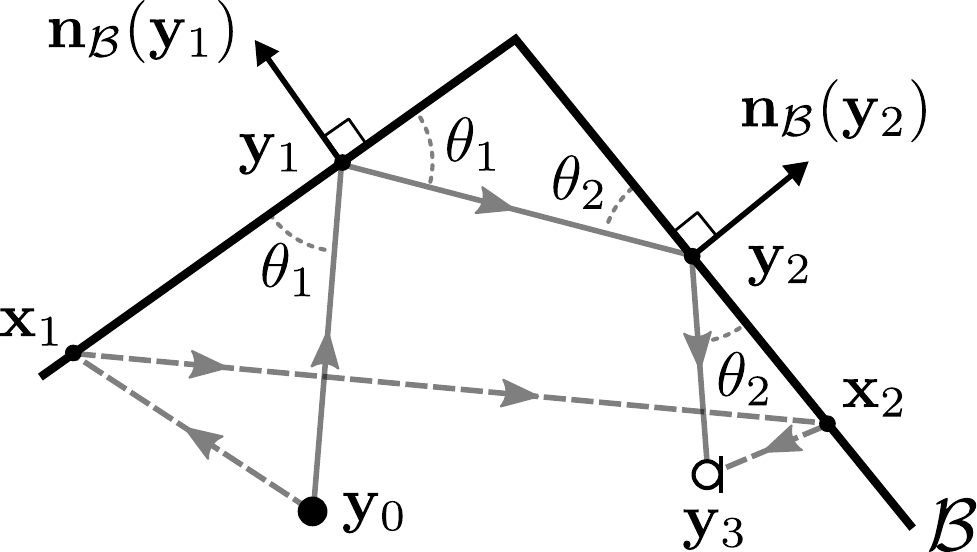}
\caption{Illustration of two reflection paths, i.e. $(\bb{y}_0,\bb{y}_1,\bb{y}_2,\bb{y}_3)$ and $(\bb{y}_0,\bb{x}_1,\bb{x}_2,\bb{y}_3)$ where $\bb{y}_0$ is the source and $\bb{y}_3$ the sink. Assume that at the reflection points, $\bb{n}_{\mathcal{B}}$ assigns an outward-pointing unit vector perpendicular to the corresponding boundaries. Then, the reflection path $(\bb{y}_0,\bb{y}_1,\bb{y}_2,\bb{y}_3)$ is valid according to Definition \ref{def:validity} whereas the path $(\bb{y}_0,\bb{x}_1,\bb{x}_2,\bb{y}_3)$ is clearly not valid.} 
\label{fig:Validity}
\end{figure}

\subsection{Visible Reflection Paths and Virtual Sources}
The main approach that we exploit in this paper is to focus on the properties of reflection paths instead of virtual sources. This interpretation is based upon the observation that each reflection path is explained by a single virtual source. Under the assumptions of the wave propagation model we are considering, the response at a sink $\mathbf{r}$ for a reflection path $\mathbf{R}=(\mathbf{s},\mathbf{y}_1,\ldots,\mathbf{y}_i,\mathbf{r})$ only depends on the distance traveled---further assuming perfectly reflecting boundaries; we will include absorption in Section \ref{sec:RIRmeasure}. In particular, this distance is simply given by 
\begin{equation}
\label{eq:DistanceReflection}
|\mathbf{R}|:=\sum_{j=0}^i||\mathbf{y}_{j+1}-\mathbf{y}_j||
\end{equation}
where $\mathbf{y}_0=\mathbf{s}$ and $\mathbf{y}_{i+1}=\mathbf{r}$. In principle, it is then possible to explain this reflection path by placing a virtual source at any point at a distance $|\mathbf{R}|$ from the sink, i.e. any point $\mathbf{v}\in\mathcal{H}$ such that $||\mathbf{v}-\mathbf{r}||=|\mathbf{R}|$.

One way of finding a point satisfying the latter is by means of recursively performing the symmetric projections introduced in Definition \ref{def:VirtualSourceLocation}. In particular, given a reflection path $\mathbf{R}=(\mathbf{s},\mathbf{y}_1,\ldots,\mathbf{y}_i,\mathbf{r})$, we can always find the associated virtual source as $\mathbf{u}=P_{\bb y_i}\circ \cdots \circ P_{\bb y_1}(\bb s)$. It follows from Definition \ref{def:VirtualSourceLocation} that $||\mathbf{u}-\mathbf{r}||=|\mathbf{R}|$. Thus, we can think of $\mathbf{u}$ as the virtual source that explains the reflection path $\mathbf{R}$. Moreover, it is the only position where the sound path from the virtual source has the same angle of incidence, with respect to the receiver, as the original reflection path. This will become relevant in Section \ref{sec:RIRmeasure} when we introduce directivity.

% \begin{equation}
% \begin{split}
% \nu_\mathcal{B}:\ &\mathcal{A}_\mathcal{B}\rightarrow \mathcal{H}\\
% & \mathbf{R}\mapsto\nu_{\mathcal{B}}(\mathbf{R})=P_{\bb y_i}\circ \cdots \circ P_{\bb y_1}(\bb s)
% \end{split}
% \end{equation}
% where $\mathcal{A}_\mathcal{B}$ is the set of valid reflection paths corresponding to the boundary $\mathcal{B}\subseteq\mathcal{H}$.

In other words, we associate a virtual source for each reflection path in such a way that the effect of the different boundary reflections for that particular path is completely explained by this virtual source. In order for the reflection paths to be physically meaningful, we introduced the concept of valid reflection paths. However, we can still have valid reflection paths that are not, in principle, physically realizable. Fig.~\ref{fig:Visibility} shows an example where the reflection path $(\mathbf{s},\mathbf{y}_4,\mathbf{y}_3^\prime,\mathbf{r})$, though valid, results in a ray that crosses the boundary. In order to avoid this, we now introduce the concept of visibility. This definition guarantees that the open line segment joining two subsequent reflection points does not intersect the boundary.
\begin{definition}
Consider a reflection path $({\bb y}_0, \dots , {\bb y}_{i+1})$ and $\bb v_j(\lambda):= \lambda {\bb y}_{j+1} + (1-\lambda){\bb y}_j$ for $j\in [0{:}i]$ and $\lambda\in(0,1)$. The reflection path $({\bb y}_0, \dots , {\bb y}_{i+1})$ is said to be visible if the following is satisfied
\begin{equation}
\label{eq:VisibilityCondition}
\langle {\bb n}_{\mathcal B}(\bb v_j(\lambda)), {\bb y}_{j+1}-{\bb y}_j\rangle = 0.
\end{equation}
for all $j\in [0{:}i]$ and all $\lambda\in(0,1)$.
\end{definition}
It follows from this definition that visibility is independent of validity, e.g. a reflection path can be visible and not valid. Fig.~\ref{fig:Visibility} depicts two valid reflection paths where $(\mathbf{s},\mathbf{y}_1,\mathbf{y}_2,\mathbf{y}_3,\mathbf{y}_4,\mathbf{r})$ is visible and $(\mathbf{s},\mathbf{y}_4,\mathbf{y}_3^\prime,\mathbf{r})$ is not visible. It is straightforward to see that there exists a $\lambda\in(0,1)$ such that $\langle {\bb n}_{\mathcal B}(\lambda {\bb r} + (1-\lambda){\bb y}_3^\prime), {\bb r}-{\bb y}_3^\prime\rangle \neq 0$.

\begin{figure}[!ht]
\centering
\includegraphics[width=0.8\linewidth]{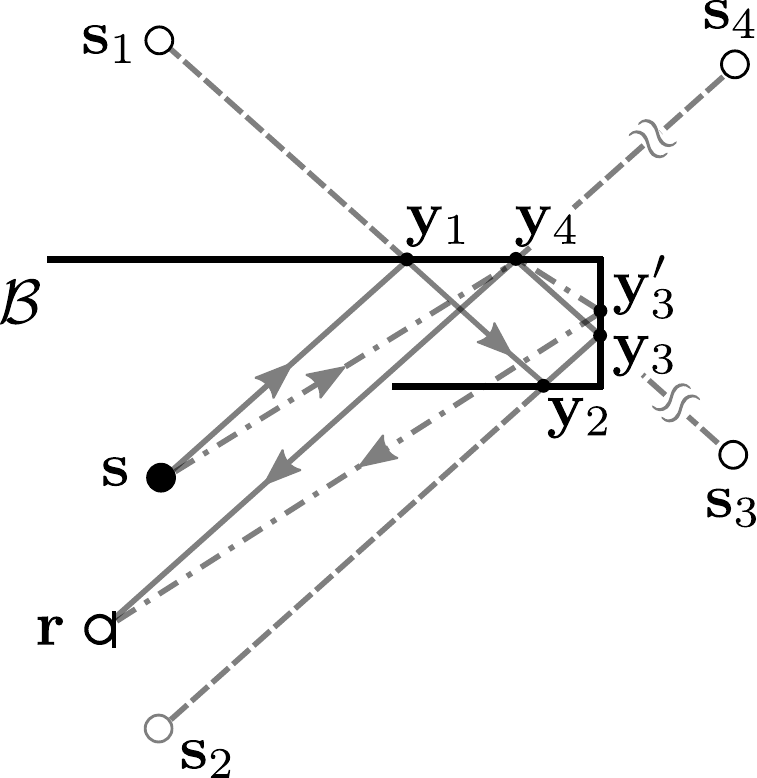}
\caption{Example of a valid and visible reflection path $(\mathbf{s},\mathbf{y}_1,\mathbf{y}_2,\mathbf{y}_3,\mathbf{y}_4,\mathbf{r})$, and a valid reflection path $(\mathbf{s},\mathbf{y}_4,\mathbf{y}_3^\prime,\mathbf{r})$ which is not visible where $\{\bb s_i\}$ are given by symmetric projections. The relative positions of $\mathbf{s}_3$ and $\mathbf{s}_4$ have been modified for illustration purposes.}
\label{fig:Visibility}
\end{figure}

Thus, in our approach, we consider valid and visible reflection paths---for a given room geometry---and generate the virtual sources based on these reflection paths. This is in contrast with previous literature \cite{Borish:1984aa} where virtual sources are generated based on the room geometry in order to explain reflection paths for a given source and sink. The definitions presented in this paper allow us to generalize the image source method to arbitrary reflecting boundaries which the approach in \cite{Borish:1984aa} does not cover, e.g. the boundary described in Example \ref{ex:NoSoundCorridor}.
\begin{example}[No-sound Corridor Effect]
If we consider, for example, a scenario where rays can reflect off both sides of a wall, the method presented in \cite{Borish:1984aa} is not able to appropriately accommodate this situation. In particular, the latter approach requires to establish a convention regarding inward-pointing normal vectors in order to explain reflections. Thus, virtual sources that have been generated by means of outward-pointing normal vectors are discarded from the model. This can lead to neglecting physically meaningful reflection paths. For example, in Fig.~\ref{fig:NoSoundCorridor}, assuming $\hat{\mathbf{n}}$ is an inward-pointing vector, $\hat{\mathbf{n}}^\prime$ is then pointing outwards. This means that the reflection path $(\mathbf{s},\mathbf{y}_2,\mathbf{r})$ is discarded from the model. By changing the inward- and outward-pointing convention of these two vectors, we find a degenerate situation where no rays are present in a region between the two walls---i.e. rays such as $(\mathbf{s},\mathbf{y}_1,\mathbf{y}_2,\mathbf{y}_3,\mathbf{r})$ are neglected. If $\mathbf{s}$ is a sound source, we refer to this as the no-sound corridor effect.

\label{ex:NoSoundCorridor}
\end{example}
\begin{figure}[!ht]{}
\centering
\includegraphics[width=0.9\linewidth]{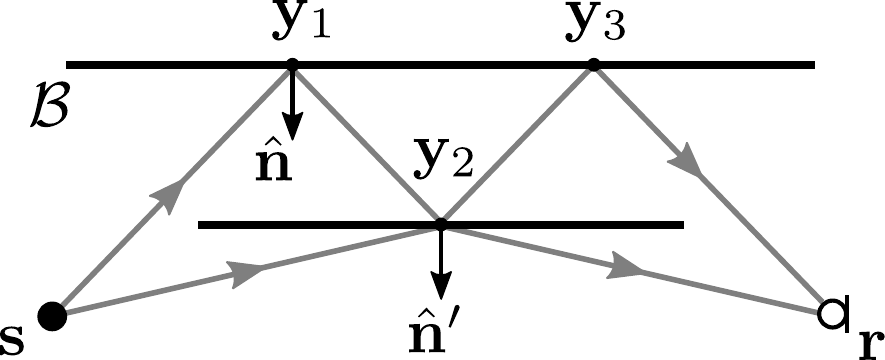}
\caption{No-sound corridor scenario described in Example \ref{ex:NoSoundCorridor}. Any convention of normal vectors---i.e. inward- or outward-pointing vectors---leads to a situation where either one of the reflection paths shown is discarded according to previous approaches \cite{Borish:1984aa}.}
\label{fig:NoSoundCorridor}
\end{figure}

\subsection{Rooms with Planar Walls}{}
In practice, it is very common to use the image source method in polygonal enclosures or rooms, i.e. with planar walls. In principle, the image source method presented here applies to any boundary as defined in (\ref{eq:BoundaryField}). However, rooms consisting of planar walls present particular characteristics that make them convenient for analysis. We do not restrict ourselves to closed rooms, but we consider a more general type of room which, for example, are also allowed to be open and with finite- or infinite-length walls (see Fig.~\ref{fig:Visibility} as an example of a room as referred herein). We formalize these notions in the following definition.

%unit vectors $\{\mathbf{n}_i\}_{i\in\mathcal{I}}$, and constants $\{b_i\}_{i\in\mathcal{I}}$ for some $\mathcal{I}\subseteq\mathbb{N}$ such that $V_i\subseteq S_i$ and

\begin{definition}
\label{def:planarB}
A boundary $\mathcal B\subset\mathbb{R}^n$ is planar if there exist a countable collection of connected and disjoint open sets $\{W_i\}_{i\in\mathcal{I}}$ and hypersurfaces $\{H_i\}_{i\in\mathcal{I}}$ such that 
\begin{equation}
\label{eq:planar}
\mathcal{B}\setminus B= \bigcup_{i\in\mathcal I} W_i
\end{equation}
where $W_i\subseteq H_i$, $H_i=\{\mathbf{v}\in\mathbb{R}^n|\langle \bb n_i, \bb v \rangle = b_i,\ \mathbf{n}_i\in\mathbb{R}^n\textrm{ and }b_i\in\mathbb{R}\}$ for $i\in\mathcal{I}$, and $B$ is a set of measure zero with respect to the Lebesgue measure. Each set $W_i$ is referred to as a wall whenever $|\mathcal{I}|\leq|\mathcal{I}^\prime|$ for any other collection of sets $\{W_i^\prime\}_{i\in\mathcal{I}^\prime}$ satisfying (\ref{eq:planar}).
\end{definition}
Roughly speaking, the walls of a planar boundary can be considered to be connected disjoint open subsets of hypersurfaces in the corresponding dimension. Note that single points are not considered walls. Further, Definition \ref{def:planarB} describes a more general set of rooms such as walls that can be circles in a three-dimensional space, which may be counterintuitive at first. Another example of a boundary in $\mathbb{R}^2$ is the union of the intervals $(0,1)$ and $(1,2)$, which are considered two different walls according to the definition above. Further, an appropriate choice of $B$ can make any partition of these intervals into open subintervals form a collection satisfying (\ref{eq:planar}); however, they cannot be referred to as walls. The introduction of the set $B$ also becomes relevant when we have, for example, intersecting walls. Consider a planar boundary consisting of two intersecting hypersurfaces. Then, $B$ is the intersection of the two hypersurfaces, and the walls correspond to four connected disjoint open sets satisfying that the intersection of their closure is precisely $B$. 

In the case of planar boundaries---e.g. a polyhedral room---it is possible to characterize validity by first considering the ordered sequence of walls where a ray reflects off for a given source $\mathbf{s}$ and sink $\mathbf{r}$. This stems from the fact that there exists a one-to-one correspondence between the associated virtual sources---which form a discrete set by Proposition \ref{prop:DiscreteVirtualPlanar}---and the sequences of walls corresponding to the trajectory of reflections.

In particular, assuming a collection of walls $\{W_i\}$ and a reflection path with a sequence of reflections at walls indexed by $(i_1,\dots,i_k)$, the single associated virtual source explaining these reflections can then be generated as $P_{\bb y_k}\circ \cdots \circ P_{\bb y_1}(\bb s)$ for any $\mathbf{y}_j\in \mathcal B\cap W_{i_j}$ where $j=1,\dots,k$. It is also possible to obtain the set of lines containing the reflection path. These are given by $\bigcup_{i=0}^{k} \mathcal L_i$ where 
\begin{equation}
\label{eq:lines1}
\begin{split}
\mathcal L_0=\lbrace \bb \lambda &\bb a + (1-\lambda) \bb b\ |\ \lambda\in\mathbb R,\\
&\bb a = P_{\bb y_{k}}\circ \cdots \circ P_{\bb y_1}(\bb s),\ \bb b = \bb r\}\\
\mathcal L_k=\lbrace \bb \lambda &\bb a + (1-\lambda) \bb b\ |\ \lambda\in\mathbb R,\\
&\bb a = \bb s,\ \bb b = P_{\bb y_{1}}\circ \cdots \circ P_{\bb y_k}(\bb r)\}
\end{split}
\end{equation}
and 
\begin{equation}
\label{eq:lines2}
\begin{split}
\mathcal L_i=\lbrace \bb \lambda &\bb a + (1-\lambda) \bb b\ |\ \lambda\in\mathbb R, \\
&\bb a = P_{\bb y_{k-i}}\circ \cdots \circ P_{\bb y_1}(\bb s), \\
&\bb b = P_{\bb y_{k-i+1}}\circ \cdots \circ P_{\bb y_k}(\bb r) \rbrace.
\end{split}
\end{equation}
for $i=1,\ldots,k-1$.

Then, a reflection path $(\bb s, \bb y_1,\dots,\bb y_k,\bb r)$ with reflection points at the corresponding sequence of walls is valid if and only if $W_i\cap \mathcal L_{i-1} \cap \mathcal L_{i}=\{\mathbf{y}_i\}$ for all $i\in[1{:}k]$ (see Proposition \ref{prop:ValidityLines} in Appendix \ref{app:GeomResults}). 

Algorithm \ref{alg:VSgeneration} summarizes how valid and visible virtual sources are generated for a planar boundary $\mathcal{B}$ consisting of walls $\{W_i\}_{i\in\mathcal{I}}$. In order to describe the procedure, it is convenient to define an auxiliary function parametrized by the number of consecutive reflections off distinct walls prior to arriving at the sink. This function, given a sequence of walls, either provides a valid reflection path if it is feasible for this combination or outputs a predefined value signaling that no valid reflection path is feasible. In particular, for a positive integer $k$, let us define the function $\bb \Psi^{(k)}:\mathcal I^k_{\mathcal{B}}\rightarrow \lbrace \varepsilon \rbrace \cup \mathcal H^{k+2}$ for some $\varepsilon\in\mathbb{R}$ which takes the form
\begin{equation}
\bb \Psi^{(k)}(\bb W)={}
\begin{cases} 
      (\bb s, \bb y_1,\dots,\bb y_k,\bb r), &\textrm{if }\ W_{i_j} \cap \mathcal L_{j-1}\cap \mathcal L_j = \{ \bb y_j\}\\
       \varepsilon, &\textrm{otherwise}
\end{cases}
\end{equation}
where $\bb W=(i_1,\dots,i_k)$. For a planar boundary with $N$ walls, the function $\bb \Psi^{(k)}$ simply checks if the reflections off subsets of $k$ walls are valid according to (\ref{eq:lines1}) and (\ref{eq:lines2}).

 \begin{algorithm}[H]
 \caption{Algorithm for generating virtual sources for a planar boundary.}
 \begin{algorithmic}[1]
 \label{alg:VSgeneration}
 \renewcommand{\algorithmicrequire}{\textbf{Input:}}
 \renewcommand{\algorithmicensure}{\textbf{Output:}}
 \REQUIRE Planar boundary $\mathcal B$, walls $\{W_i\}$, and $\bb s,\bb r$.
 \ENSURE Set of virtual sources $\mathcal{S}$
  \STATE $\mathcal{S}\leftarrow \lbrace \rbrace$
  \FOR {$k = 1,2,\dots$}
  \FOR {$\mathbf{W}\in\mathcal I^k$}
  \STATE $\mathbf{P}\leftarrow \mathbf{\Psi}^{(k)}(\mathbf{W})$
  \IF {($\mathbf{P} \ne \varepsilon$ \AND $\mathbf{P}$ visible)}
  \STATE $(\bb s,\bb y_1,\dots,\bb y_k,\bb r)\leftarrow \mathbf{P}$
  \STATE $\mathcal{S}\leftarrow \mathcal{S}\cup \lbrace P_{\bb y_k}\circ \cdots \circ P_{\bb y_1}(\bb s) \rbrace$
  \ENDIF
  \ENDFOR
  \ENDFOR
 \RETURN $\mathcal{S}$
 \end{algorithmic}
 \end{algorithm}

% {\color{red}discretization of the angles to check if the rays are visible or not?}

\section{Room Impulse Response for Arbitrary Room Geometries}
\label{sec:RIRmeasure}
The convenience of approaches like the image source method in acoustics \cite{Allen:1979aa} lies in its analytical and computational simplicity for computing the room impulse response (RIR) without having to explicitly solve differential equations. Given the room geometry, and the source and receiver locations, we have shown how to appropriately obtain the set of virtual sources so that it covers more general room geometries than previous methods \cite{Borish:1984aa}. For ease of explanation, we will consider Euclidean spaces, i.e. $\mathcal{H}=\mathbb{R}^N$ for $N$ a positive integer.

From our previous discussion, the appropriate virtual sources to consider are those that correspond to valid and visible reflection paths. In other words, we say that, given source and receiver locations, a virtual source $\bb u\in\mathcal{H}$ is valid and visible if there exists a valid and visible reflection path $(\mathbf{s},\mathbf{y}_1,\ldots,\mathbf{y}_k,\mathbf{r})$ such that $\bb u = P_{\bb y_{k}}\circ \cdots \circ P_{\bb y_1}(\bb s)$ given a boundary $\mathcal{B}$, source $\bb s$, and receiver $\bb r$. We denote the set of valid and visible virtual sources by $\mathcal{S}_\mathbf{r}$ where the dependence on $\bb s$ and $\mathcal{B}$ is implicit.

If the room consists of planar boundaries, the set of virtual sources $\mathcal{S}_\mathbf{r}\subset\mathbb{R}^N$ is composed of a discrete set of points, i.e. $\mathcal{S}_\mathbf{r}=\{\mathbf{s}_m\}_{m\in\mathbb{N}}$. In this case, the response to an excitation signal $f\in L^2(\mathbb{R})$ for a source $\mathbf{s}$ and receiver $\mathbf{r}$ can be written as \cite{Allen:1979aa}
\begin{equation}
\label{eq:DiscreteVS}
h_f(t;\mathbf{r})=\sum_{m\in\mathbb{N}}\frac{1}{||\mathbf{s}_m-\mathbf{r}||}f(t-||\mathbf{s}_m-\mathbf{r}||/c)
\end{equation}
where $c$ is the speed of the wavefronts, $\mathbf{s}_0:=\mathbf{s}$, and $\mathbf{s}\neq\mathbf{r}$.

Equation (\ref{eq:DiscreteVS}) is clear when the set of virtual sources is discrete, i.e. we have point sources. However, when considering arbitrary boundaries, the structure of the associated virtual sources may be more complex. Moreover, this model is not amenable to including boundary absorption or directivity of sources. In order to remedy this, it is necessary to use different analytical tools to compute the impulse response while keeping the advantages of an image source method.

The approach we will be taking here relies on expressing the impulse response as the integral of a function against an appropriate measure. This allows us to extend, in an analytically tractable manner, the results on point sources to more general distributions of virtual sources that describe boundaries with arbitrary shapes and absorption coefficients as well as incorporating the directivity of sources. 

For the sake of illustration, assume that the source and receiver are not collocated, i.e. there exists an $\epsilon>0$ such that $D_{\mathbb{R}^N}(\mathbf{r},\epsilon)\cap(\{\mathbf{s}\}\cup\mathcal{S})=\emptyset$ where $D_{\mathbb{R}^N}(\mathbf{r},\epsilon)$ is the open ball in $\mathbb{R}^N$, radius $\epsilon$, and center $\mathbf{r}$. It is then possible to write for $\mathcal{W}_{N,\epsilon}:=\mathbb{R}^N\setminus D_{\mathbb{R}^N}(\mathbf{s},\epsilon)$
\begin{equation}
\label{eq:IRmeasure}
h_f(t;\mathbf{r})=\int_{\mathcal{W}_{N,\epsilon}}\frac{1}{||\bb u-\bb r||}f(t-||\bb u-\bb r||/c)\mathrm{d}\mu_{\mathcal{S}_\mathbf{r}}(\bb u)
\end{equation}
where $\mu_{\mathcal{S}_\mathbf{r}}$ is an appropriately defined measure that depends on the set of virtual sources $\mathcal{S}_\mathbf{r}$. In a sense, this choice of impulse response assumes that the different virtual sources are in the far field with respect to the receiver location. The parameter $\epsilon$ can then be interpreted as modeling a distance from which this may be valid. Note that if source and receiver are collocated, it is straightforward to see that we can find an $\epsilon>0$ such that $D_{\mathbb{R}^N}(\mathbf{s},\epsilon)\cap(\mathcal{S}\setminus\mathbf{r})=\emptyset$ and then add the contribution of $f(t)$ directly. 

Under this interpretation, the example in (\ref{eq:DiscreteVS}) can be obtained by first defining the following measure
\begin{equation}
\mu_{\mathcal{S}_\mathbf{r}}(\Gamma):=\sum_{m\in\mathbb{N}}\delta_{\mathbf{s}_m}(\Gamma)
\end{equation}
where the terms in the summation are Dirac measures and $\Gamma$ is, for example, an element of the Borel $\sigma$-algebra over $\mathbb{R}^N$ denoted by $\mathfrak{B}_N$. It is clear then that using this measure in (\ref{eq:IRmeasure}) gives (\ref{eq:DiscreteVS}).

However, if the structure of the room also consists of nonplanar boundaries---e.g. curved walls---, the set of virtual sources is not discrete. In a three-dimensional setting, for example, we then wish to construct a model that allows the inclusion of virtual sources composed of points, curves, and surfaces in order to accommodate a wide variety of reflecting objects and sources distributions. One possibility to construct a measure that accommodates these cases is to first define it with the help of a limiting procedure. In particular, we consider punctual virtual sources that in the limit completely cover the set $\mathcal{S}_\mathbf{r}$. We illustrate this intuition in the next example.

\begin{example}
Assume that $\mathcal{S}_\bb r=(0,1)$, then we can define the following measure
\begin{equation}
\label{eq:DiracMeasuresUI}
\mu_{\mathcal{S},M}(\Gamma):=\frac{1}{M}\sum_{m=1}^{M-1}\delta_{\frac{m}{M}}(\Gamma\cap\mathcal{S})
\end{equation}
for $\Gamma$ in the Borel $\sigma$-algebra over $\mathbb{R}^N$ denoted by $\mathfrak{B}_N$. Thus, it can be shown that in the limiting case when $M\geq1$ is large, this combination of Dirac measures tend to represent the length of the interval $(0,1)$ (see Example \ref{ex:interval} and Proposition \ref{prop:DiracMeasures} in Appendix \ref{app:HMeasures}).
\begin{figure}[!ht]
\centering
\includegraphics[width=0.6\linewidth]{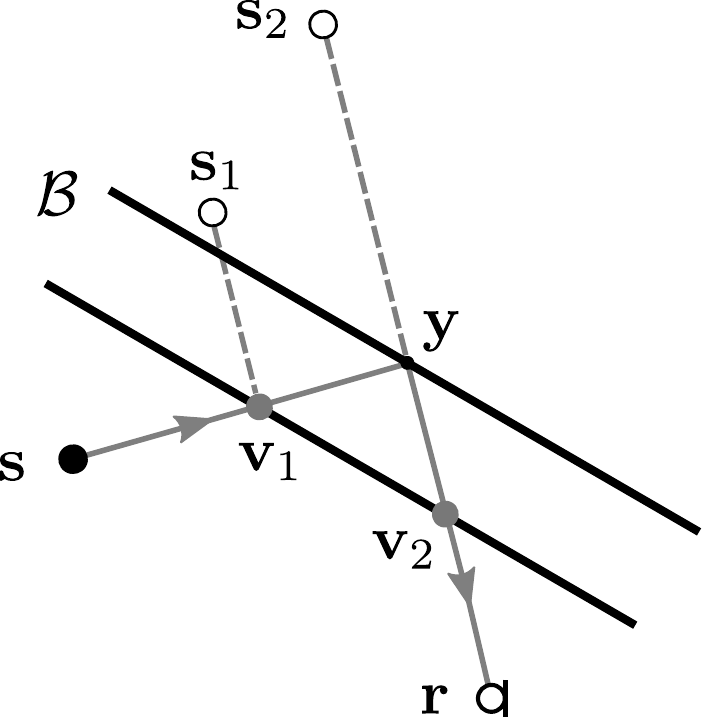}
\caption{Illustration of a degenerate situation for a boundary $\mathcal{B}$ consisting of two parallel lines where one of them is missing points $\bb v_1$ and $\bb v_2$. Thus, the reflection path $(\bb s, \bb y,\bb r)$ results in a valid and visible source $\bb s_2$ that is not physically relevant. In this scenario, it is only necessary to consider the virtual source $\bb s_1$ to explain the corresponding reflections.}
\label{fig:DegenerateEx}
\end{figure}
\end{example}

Proposition \ref{prop:DiracMeasures} motivates the use of a measure in (\ref{eq:IRmeasure}) that provides the length or area of the corresponding sets---note that there are different constructions of (\ref{eq:DiracMeasuresUI}) that also converge to the same measure. Thus, we require a measure that in an $N$-dimensional Euclidean space provides a sense of size of different sets as well as being able to distinguish among sets with different dimensions. The Hausdorff measure and the Hausdorff dimension exactly meet these requirements.

Some of the background on the Hausdorff measure can be found in Appendix \ref{app:HMeasures}. This measure provides a generalization of the Lebesgue measure---denoted by $\lambda_{\mathbb{R}^N}$ for the $N$-dimensional Euclidean space---in the sense that it does not depend on the dimension of the space that it operates on. This manifests itself in the Hausdorff measure detecting sets that are otherwise neglected by the Lebesgue measure. In other words, it does not obviate the lower dimensional sets, e.g. $\mathbf{H}^{N-1}(\mathbb{S}^{N-1})\neq0$ as opposed to $\lambda_{\mathbb{R}^N}(\mathbb{S}^{N-1})=0$ where $\mathbb{S}^{N-1}$ is the unit sphere in $\mathbb{R}^{N}$ \cite[Theorem 4.2.7]{Stroock:2011aa}. The way of formalizing this notion of dimension is based on the Hausdorff dimension which, in this case, gives $\mathrm{Hdim}(\mathbb{S}^{N-1})=N-1$.

The boundary $\mathcal{B}$, as defined so far, can lead to considering reflection paths that are not physically meaningful. For example, Fig.~\ref{fig:DegenerateEx} shows an example where the reflection path $(\bb s,\bb y,\bb r)$ results in a virtual source $\bb s_2$ that is not relevant in the final computation. In this case, it is only required to consider the virtual source $\bb s_1$ to correctly explain the reflections. In order to avoid these degenerate cases, we assume that the boundary is given by $\mathcal{B}=\bigcup_{n=0}^N\mathcal{C}^n$ where $\mathrm{Hdim}(\mathcal{C}^n)=n$ and there do not exist $\{\Gamma^n\in\mathfrak{B}_N\}$ such that $\mathbf{H}^n(\Gamma^n\setminus\mathcal{C}^n)=0$ whenever $\Gamma^n\supset\mathcal{C}^n$.

Then, the Hausdorff dimension provides us with a way of classifying the set of virtual sources $\mathcal{S}_\mathbf{r}\in\mathfrak{B}_N$ based on their dimension. In particular, we can write
\begin{equation}
\mathcal{S}_\mathbf{r}=\bigcup_{n=0}^N\mathcal{S}^n_\mathbf{r}
\end{equation}
where $\mathrm{Hdim}(\mathcal{S}^n_\mathbf{r})=n$. For example, punctual sources are then contained in $\mathcal{S}^0_\mathbf{r}$, curves in $\mathcal{S}^1_\mathbf{r}$, or surfaces in $\mathcal{S}^2_\mathbf{r}$. 

Building upon the concepts above, we then introduce the following measure that we can use to model the room impulse response in (\ref{eq:IRmeasure}) for arbitrary boundaries, i.e.
\begin{IEEEeqnarray}{rCl}
\label{eq:measure}
\mu_{\mathcal{S}_\bb r}(\Gamma)&:=&\sum_{n=0}^N\mathbf{H}^n(\Gamma\cap\mathcal{S}^n_\mathbf{r})\nonumber\\
&=&\sum_{m\in\mathcal{M}}\delta_{\mathbf{s_m}}(\Gamma)+\sum_{n=1}^N\mathbf{H}^n(\Gamma\cap\mathcal{S}^n_\mathbf{r})
\end{IEEEeqnarray}
for any $\Gamma\in\mathfrak{B}_N$ where $\mathcal{S}^0_\mathbf{r}=\{\mathbf{s}_m\}_{m\in\mathcal{M}}$ for $\mathcal{M}\subseteq\mathbb{N}$. Without loss of generality, we have assumed that we have a single punctual source. The extension to multiple punctual sources can be performed, in a straightforward manner, by superposition of the associated measures in the above equation.

% {\color{orange}
% \subsection{Diffraction}
% It has been shown that edge diffraction effects can be integrated into the image source model by adding a set of supplementary sources \cite{Torres:2001aa,Mechel:2002aa}. For example, in $\mathbb{R}^3$, the curve representing the intersection of two surfaces can cause diffraction phenomena. This can be modeled by continuously adding secondary sources distributed along the intersection. Under the framework presented here, the image sources for such a distribution can be obtained by considering each point in the curve as a source. It is straightforward to verify then that this results in image sources also distributed along curves, i.e. with the same Hausdorff dimension. In principle, this approach also extends to higher dimensions---under the physically correct interpretation---which allows us to include the virtual sources corresponding to diffraction effects in the set $\mathcal{S}_\bb r$. 
% }

\subsection{Directivity and Absorption}
In previous sections, we have assumed that the source is punctual and radiates energy in an omnidirectional manner. Directivity is a common model to describe how sources may emit energy more or less concentrated in different directions. In this section, we show how this property can be incorporated into our model. 

A convenient and popular way of expressing the directivity pattern of a given set of sources is by considering a function defined on a sphere around the source distribution. The values of this function are then related to how concentrated energy is in a particular direction. In acoustics, this function is mainly expressed using spherical harmonics \cite[Chapter 6]{Williams:1999aa}. In particular, for a function $d:\mathbb{S}^{N-1}\to\mathbb{R}$ such that $d\in L^2(\mathbb{S}^{N-1})$, the spherical harmonics $Y_k\in L^2(\mathbb{S}^{N-1})$ provide a unique representation $d=\sum_{k\geq0}Y^{(k)}$ where the convergence is in the $L^2$ norm \cite[Chapter IV]{Stein:2016aa}. Familiar examples include Fourier series representations on the unit circle whenever $N=2$, or spherical harmonics involving Legendre polynomials in $\mathbb{R}^3$.

In order to include directivity into our model, we use the functions $d_\bb s$ and $d_\bb r$ to describe the directivity of the punctual source $\bb s$ and the receiver $\bb r$, respectively. From the construction of the symmetric projection of a vector, it is straightforward to see that, given $\bb u\in\mathcal{S}_{\bb r}$, there exists a unique valid and visible reflection path $\mathbf{R}=(\bb s, \bb y_1, \ldots, \bb y_k, \bb r)$ such that $\bb u = P_{\bb y_{k-1}}\circ \cdots \circ P_{\bb y_1}(\bb s)$ and $\bb r = P_{\bb y_{k}}(\bb u)$. Thus, we define, for convenience, the bijective function 
\begin{IEEEeqnarray}{rCl}
\chi_{\bb r}:\ &\mathcal{S}_\bb r\ &\to\ \mathcal{A}_\mathcal{B}\nonumber\\
\ &\bb u\ &\mapsto\ \chi_\bb r(\bb u)=\mathbf{R}
\end{IEEEeqnarray}
where $\mathcal{A}_\mathcal{B}$ is the set of visible and valid reflection paths corresponding to the boundary $\mathcal{B}$, source $\bb s$, and receiver $\bb r$. Note that again $\bb u=\chi_\bb r^{-1}(\mathbf{R})=P_{\bb y_{k-1}}\circ \cdots \circ P_{\bb y_1}(\bb s)$. The function $\chi_{\bb r}$, although not explicitly denoted, is also determined by the boundary and the source location. Then, we can define the directivity coefficient of the virtual sources as 
\begin{IEEEeqnarray}{rCl}
\label{eq:directivity}
d_{\mathcal{S}_\bb r}:\ &\mathcal{S}_\bb r\ &\to\ \mathbb R\\
&\bb u\ &\mapsto\ d_{\mathcal S_\bb r}(\bb u)=d_\bb s\left( \frac{\bb y_1-\bb s}{\lVert \bb y_1-\bb s\rVert} \right)\cdot d_\bb r\left( \frac{\bb y_k-\bb r}{\lVert \bb y_k-\bb r\rVert} \right)\nonumber
\end{IEEEeqnarray}
where $\chi_\bb r(\bb u)=(\bb s, \bb y_1,\dots,\bb y_k,\bb r)$. If the reflection path is $(\bb s, \bb r)$, then (\ref{eq:directivity}) should be appropriately understood by considering $\bb r$ and $\bb s$ instead of the undefined $\bb y_1$ and $\bb y_k$, respectively. It can be observed that the image source framework we have constructed allows the inclusion of directivity of virtual sources by just considering the direction towards the first and last reflection point. This is in contrast to other approaches considering rotations of spherical harmonics representations \cite{Samarasinghe:2018aa,Abhayapala:2019aa}.

We model absorption solely as a function of the spatial dimension, i.e. $a:\mathcal{B}\to[0,1]$. It would also be straightforward to include amplification in the latter expression by just considering $\mathbb{R}$ as the image of $a$. Similarly, under the above conditions and given the reflection path $\chi_\bb r(\bb u)=\mathbf{R}$ for a particular virtual source $\bb u\in\mathcal{S}_\bb r$, we can incorporate absorption into the model by defining a function 
\begin{IEEEeqnarray}{rCl}
\label{eq:absorption}
a_{\mathcal{S}_\bb r}:\ &\mathcal{S}_\bb r\ &\to\ [0,1]\nonumber\\
&\bb u\ &\mapsto\ a_{\mathcal{S}_\bb r}(\bb u)=\prod_{i=1}^ka(\bb y_i).
\end{IEEEeqnarray}

Then, we can modify (\ref{eq:IRmeasure}) in such a way that models both directivity and absorption as follows
\begin{equation}
\label{eq:IRmeasure_ad}
h_f(t;\mathbf{r})=\int_{\mathcal{S}_\bb r}\frac{a_{\mathcal S_\bb r}(\bb u)d_{\mathcal S_\bb r}(\bb u)}{||\bb u-\bb r||}f\Big(t-\frac{||\bb u-\bb r||}{c}\Big)\mathrm{d}\mu_{\mathcal{S}_\mathbf{r}}(\bb u).
\end{equation}
Note that we have chosen the domain to be precisely $\mathcal{S}_\bb r$ in order to simplify the definition of the absorption and directivity coefficients. It is important to emphasize that this model could be easily extended to include absorption dependent on the angle of incidence by building upon the approach used for directivity. In particular, given a reflection path, the angle of incidence for each reflection point could be computed from the properties of the symmetric projection. Then, the corresponding factor could be incorporated into each of the terms in (\ref{eq:absorption}).

\section{Conclusions}
We presented a framework that allows the model obtained by the image source method to accommodate arbitrary reflecting boundaries. The latter includes, for example, curved walls and rooms presenting wall-size openings. In order to handle this more general and complex distribution of virtual sources, we also showed how the room impulse response can be explicitly obtained. The latter can also model boundary absorption and source directivity. The work shown here can establish the foundation for applying the image source method to more complicated boundary configurations whereby analytical and computational advantages can be gained. 

\appendices
\section{}
\label{app:GeomResults}
%\begin{lemma}
%\label{lemma:DiscreteVirtualPlanar}
%Assume an omidirectional source. If the boundary is a polyhedron (define polyhedron as a set of hypersurfaces inequalities), then the virtual sources generated as in Definition \ref{def:VirtualSourceLocation} lie in a discrete set of points.
%\end{lemma}
%\begin{proof}
%Actually, it’s quite easy I realized. For a wall defined by a hyper surface $\mathcal{S}=\{\mathbf{v}\in\mathcal{H}:\mathbf{n}^T\mathbf{v}=b\}$ for some $b\in\mathbb{R}$ and $\mathbf{b}\in\mathcal{H}$, the location of the source is given by some $\mathbf{u}$ and the virtual source for a reflection at a point $\mathbf{v}_1$ is $P_{\mathbf{v}_1}(\mathbf{u})=(\mathbf{u}-2\langle \mathbf{u},\mathbf{n}\rangle\mathbf{n})+2\langle\mathbf{v}_1,\mathbf{n}\rangle\mathbf{n}$ but both terms are constant since $\mathbf{n}=\mathbf{n}_\mathcal{S}(v)$ for any $\mathbf{v}\in\mathcal{S}$. Then, the generalization to a polyhedron is straightforward, could you put that formally into the lemma? Then, also say that it’s not the case for curved walls and you just show a counterexample, does that make sense?
%\end{proof}

\begin{proposition}
\label{prop:DiscreteVirtualPlanar}
Assume omidirectional sources. If the boundary $\mathcal{B}$ is planar for some connected and disjoint open sets $\{W_i\}_{i\in\mathcal{I}}$ and hypersurfaces $\{H_i\}_{i\in\mathcal{I}}$ where $\mathcal{I}\subseteq\mathbb{N}$, then the set of virtual sources generated as in Definition \ref{def:VirtualSourceLocation} consists of a discrete set of points.
\end{proposition}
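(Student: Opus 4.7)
The plan is to exploit the key geometric feature of planar walls: on each wall, the reflecting hyperplane is globally defined, so the symmetric projection reduces to a fixed affine involution that is independent of the particular reflection point chosen on that wall.

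First, I would note that for any $\mathbf{v}\in W_i\subseteq H_i$, the vector $\bb{n}_\mathcal{B}(\mathbf{v})$ can (and should, under the physical interpretation of specular reflection at a planar surface) be taken as the unit normal $\bb{n}_i$ of the hyperplane $H_i$, independent of $\mathbf{v}\in W_i$. Then for any $\mathbf{u}\in\mathcal{H}$ and $\mathbf{v}\in W_i$,
\begin{equation*}
\langle \mathbf{u}-\mathbf{v},\bb{n}_i\rangle = \langle\mathbf{u},\bb{n}_i\rangle - b_i,
\end{equation*}
which does not depend on the particular point $\mathbf{v}$ of $W_i$. Consequently, the symmetric projection in Definition \ref{def:VirtualSourceLocation} reduces, for every $\mathbf{v}\in W_i$, to the fixed affine reflection
\begin{equation*}
Q_i(\mathbf{u}) := \mathbf{u}-2\bigl(\langle\mathbf{u},\bb{n}_i\rangle - b_i\bigr)\bb{n}_i
\end{equation*}
of $\mathbf{u}$ across the hyperplane $H_i$. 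In other words, $P_{\mathbf{v}}(\mathbf{u})=Q_i(\mathbf{u})$ whenever $\mathbf{v}\in W_i$.

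Next, I would apply this observation recursively along any valid and visible reflection path. Given a reflection path $(\mathbf{s},\mathbf{y}_1,\dots,\mathbf{y}_k,\mathbf{r})$ whose reflection points lie, in order, in walls indexed by $(i_1,\dots,i_k)\in\mathcal{I}^k$, the associated virtual source is
\begin{equation*}
P_{\mathbf{y}_k}\circ\cdots\circ P_{\mathbf{y}_1}(\mathbf{s})=Q_{i_k}\circ\cdots\circ Q_{i_1}(\mathbf{s}),
\end{equation*}
which depends only on the sequence of wall indices and not on the specific reflection points. Hence the set of all virtual sources produced by Definition \ref{def:VirtualSourceLocation} is contained in
\begin{equation*}
\mathcal{S}\subseteq \bigcup_{k\in\mathbb{N}}\bigl\{Q_{i_k}\circ\cdots\circ Q_{i_1}(\mathbf{s}) : (i_1,\dots,i_k)\in\mathcal{I}^k\bigr\}.
\end{equation*}

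Finally, because $\mathcal{I}\subseteq\mathbb{N}$ is countable, each $\mathcal{I}^k$ is countable, and a countable union of countable sets is countable. Therefore $\mathcal{S}$ is (at most) countable, which in the sense used in this paper is what is meant by a discrete set of points indexed by $\mathbb{N}$. The only step that is not purely combinatorial is the reduction $P_{\mathbf{v}}=Q_i$ on each wall, which I expect to be the crux of the argument; its justification hinges on noting that the definition of a planar boundary together with the physical law of reflection forces $\bb{n}_\mathcal{B}$ to be the constant unit normal on each wall $W_i$, and once this is established the remainder of the proof is bookkeeping over finite sequences of wall indices.
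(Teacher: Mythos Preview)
Your proposal is correct and follows essentially the same approach as the paper: both reduce $P_{\bb v}$ on a wall $W_i$ to the fixed affine reflection $\bb u\mapsto \bb u-2(\langle\bb u,\bb n_i\rangle-b_i)\bb n_i$ across $H_i$, making the virtual source depend only on the sequence of wall indices. Your write-up is in fact more complete than the paper's, which stops after the single-reflection computation and leaves the recursive/countability step implicit.
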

\begin{proof}
It follows directly from the definition of hypersurface that, given $H_i$, there exists an $\bb{n}_i$ and $b_i\in\mathbb{R}$ such that $\bb n_{\mathcal B}(\bb v_i)=\bb n_i$ and $\langle \bb n_i, \bb v_i\rangle=b_i$ for any $\bb v_i \in H_i$. Given a source position $\bb{u}\in\mathcal{H}$, we can then write the following
\begin{equation}
\begin{split}
P_{\bb v_i}(\bb u)&=\bb u - 2\langle \bb u - \bb v_i, \bb n_{\mathcal B}(\bb v_i)\rangle \bb n_{\mathcal B}(\bb v_i)\\
&=\bb u-2\langle \bb u , \bb n_i\rangle \bb n_i+2 b_i \bb n_i
\end{split}
\end{equation}
for any $\bb{v}_i\in H_i$. Thus, for a given source and hypersurface, there is a single associated $P_{\bb v_i}(\bb u)$.
%For $i\in \mathcal I\subseteq \mathbb N$, let $W_i$ be the $i$-th wall, since it is a straight wall, for any $\bb v_i \in W_i$, $\bb n_{\mathcal B}(\bb v_i)=\bb n_i$ and $\langle \bb n_i, \bb v_i\rangle=b_i$ are constant. Hence, for $\bb v_i\in W_i$, and for any $\bb u \in \mathcal H$, $P_{\bb v_i}(\bb u)=\bb u - 2\langle \bb u - \bb v_i, \bb n_{\mathcal B}(\bb v_i)\rangle \bb n_{\mathcal B}(\bb v_i)=\bb u-2\langle \bb u , \bb n_i\rangle \bb n_i+2 b_i \bb n_i$ only depends on $\bb v_i$ through the index of the wall. The associated virtual source corresponding to a reflection path is then a function of the walls against which the reflections takes place.% Finally, the set of finite sequence of elements of $\mathcal I$ is a countable set, even if $\mathcal I=\mathbb N$.
\end{proof}

\begin{proposition}
\label{prop:AngleValidity}
Any valid reflection path $(\bb{y}_0, \dots , \bb{y}_{i+1})$ with an associated vector field $\bb{n}_\mathcal{B}$ satisfies the following
\begin{equation}
\label{eq:EqAngles}
\Big\langle\frac{\bb{y}_{j}-\bb{y}_{j-1}}{||\bb{y}_{j}-\bb{y}_{j-1}||}+\frac{\bb{y}_{j+1}-\bb{y}_j}{||\bb{y}_{j+1}-\bb{y}_j||},\bb{n}_{\mathcal{B}}(\bb{y}_j)\Big\rangle=0
\end{equation}
for any $j\in[1{:}i]$.
\end{proposition}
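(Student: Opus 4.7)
The plan is to reduce the identity directly to the definition of validity by computing both inner products on either side of the desired equation and showing they cancel. Fix $j\in[1{:}i]$ and work with the expression $\bb y_j-P_{\bb y_j}(\bb y_{j-1})$ that appears in the validity condition of Definition \ref{def:validity}.

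First I would expand, using Definition \ref{def:VirtualSourceLocation},
\begin{equation*}
\bb y_j - P_{\bb y_j}(\bb y_{j-1}) = (\bb y_j-\bb y_{j-1}) + 2\langle \bb y_{j-1}-\bb y_j, \bb n_{\mathcal B}(\bb y_j)\rangle\, \bb n_{\mathcal B}(\bb y_j),
\end{equation*}
and record the observation that the symmetric projection preserves distance to the reflection point, i.e. $\lVert \bb y_j - P_{\bb y_j}(\bb y_{j-1})\rVert = \lVert \bb y_j - \bb y_{j-1}\rVert$, as already noted in the paper following Definition \ref{def:VirtualSourceLocation}. Dividing by this common norm and invoking Definition \ref{def:validity} yields
\begin{equation*}
\frac{\bb y_{j+1}-\bb y_j}{\lVert \bb y_{j+1}-\bb y_j\rVert} = \frac{\bb y_j-\bb y_{j-1}}{\lVert \bb y_j-\bb y_{j-1}\rVert} + \frac{2\langle \bb y_{j-1}-\bb y_j, \bb n_{\mathcal B}(\bb y_j)\rangle}{\lVert \bb y_j-\bb y_{j-1}\rVert}\, \bb n_{\mathcal B}(\bb y_j).
\end{equation*}

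Next I would take the inner product of both sides with $\bb n_{\mathcal B}(\bb y_j)$. Because $\bb y_j\in\mathcal B$ we have $\lVert \bb n_{\mathcal B}(\bb y_j)\rVert=1$, so the last term contributes precisely $-2\,\bigl\langle (\bb y_j-\bb y_{j-1})/\lVert \bb y_j-\bb y_{j-1}\rVert,\bb n_{\mathcal B}(\bb y_j)\bigr\rangle$. Combining and rearranging gives
\begin{equation*}
\Big\langle \frac{\bb y_{j+1}-\bb y_j}{\lVert \bb y_{j+1}-\bb y_j\rVert},\bb n_{\mathcal B}(\bb y_j)\Big\rangle = -\Big\langle \frac{\bb y_j-\bb y_{j-1}}{\lVert \bb y_j-\bb y_{j-1}\rVert},\bb n_{\mathcal B}(\bb y_j)\Big\rangle,
\end{equation*}
which is exactly (\ref{eq:EqAngles}) after moving one term to the other side.

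There is essentially no obstacle here; the proof is a one-line unpacking of the validity equation. The only point that requires a moment of care is remembering that $\bb n_{\mathcal B}(\bb y_j)$ is a unit vector at boundary points, so that the inner product with the reflection-term collapses cleanly, and keeping the sign of $\langle \bb y_{j-1}-\bb y_j,\bb n_{\mathcal B}(\bb y_j)\rangle$ straight when converting between $\bb y_j-\bb y_{j-1}$ and $\bb y_{j-1}-\bb y_j$.
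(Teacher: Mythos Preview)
Your proof is correct and follows essentially the same approach as the paper: both arguments expand $\bb{y}_j-P_{\bb{y}_j}(\bb{y}_{j-1})$ via Definition~\ref{def:VirtualSourceLocation}, use the norm-preservation $\lVert \bb{y}_j-\bb{y}_{j-1}\rVert=\lVert \bb{y}_j-P_{\bb{y}_j}(\bb{y}_{j-1})\rVert$, substitute using the validity condition~(\ref{eq:LawReflection}), and take the inner product with the unit vector $\bb{n}_{\mathcal B}(\bb{y}_j)$ to obtain the cancellation. The only cosmetic difference is ordering: the paper first substitutes via validity and then expands, whereas you expand first and then substitute.
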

\begin{proof}
From the validity property in (\ref{eq:LawReflection}), we can write
\begin{equation}
\begin{split}
&\Big\langle\frac{\bb{y}_{j}-\bb{y}_{j-1}}{||\bb{y}_{j}-\bb{y}_{j-1}||}+\frac{\bb{y}_{j+1}-\bb{y}_j}{||\bb{y}_{j+1}-\bb{y}_j||},\bb{n}_{\mathcal{B}}(\bb{y}_j)\Big\rangle\\
&=\Big\langle\frac{\bb{y}_{j}-\bb{y}_{j-1}}{||\bb{y}_{j}-\bb{y}_{j-1}||}+\frac{{\bb y}_{j}-P_{\bb{y}_j}(\bb{y}_{j-1})}{\lVert {\bb y}_{j}-P_{\bb{y}_j}(\bb{y}_{j-1})\rVert},\bb{n}_{\mathcal{B}}(\bb{y}_j)\Big\rangle.
\end{split}
\end{equation}
By using the symmetric projection definition in (\ref{eq:SymmProjection}), it is straightforward to see that
\begin{equation}
\begin{split}
\label{eq:KeyIdentity}
\mathbf{y}_j-P_{\bb{y}_j}(\bb{y}_{j-1})=&(\bb{y}_j-\bb{y}_{j-1})\\
&-2\langle\bb{y}_j-\bb{y}_{j-1},\bb{n}_\mathcal{B}(\bb{y}_j)\rangle\bb{n}_\mathcal{B}(\bb{y}_j).
\end{split}
\end{equation}
The latter implies that $||\mathbf{y}_j-\mathbf{y}_{j-1}||=||\mathbf{y}_j-P_{\bb{y}_j}(\bb{y}_{j-1})||$ and 
\begin{equation}
\Big\langle\bb{y}_{j}-\bb{y}_{j-1}+{\bb y}_{j}-P_{\bb{y}_j}(\bb{y}_{j-1}),\bb{n}_{\mathcal{B}}(\bb{y}_j)\Big\rangle=0.
\end{equation}
Thus, the conclusion follows.
\end{proof}

\begin{proposition}
\label{prop:ValidityLines}
Let $(i_1,\ldots,i_k)$ be a sequence of walls indices corresponding to a planar boundary $\mathcal{B}$ consisting of walls $\{W_{i_j}\}$ and $k\in\mathbb{N}$. If $W_{i_j} \cap \mathcal L_{j-1}\cap \mathcal L_j = \{ \bb y_j\}$ for some $\mathbf{y}_j\in \mathcal B$ where $j=1,\dots,k$, then $(\bb s, \bb y_1,\dots,\bb y_k,\bb r)$ is a valid reflection path.
\begin{proof}
Since $\lbrace \bb y_j, \bb y_{j+1}, P_{\bb y_j}(\bb y_{j-1})\rbrace\subset \mathcal L_{j}$, it follows directly that
\begin{align*}
\frac{\bb y_{j+1}-\bb y_j}{\lVert \bb y_{j+1}-\bb y_j \rVert}=\frac{\bb y_{j}-P_{\bb y_j}(\bb y_{j-1})}{\lVert \bb y_{j}-P_{\bb y_j}(\bb y_{j-1}) \rVert}.
\end{align*}
Thus, $(\bb s,\bb y_1,\dots,\bb y_k,\bb r)$ is a valid reflection path.
\end{proof}
\end{proposition}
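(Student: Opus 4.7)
The plan is to verify, for each $j\in[1{:}k]$, the reflection law (\ref{eq:LawReflection}) at $\bb y_j$ via the key claim that $P_{\bb y_j}(\bb y_{j-1})$ lies on the line $\mathcal L_j$. Combined with $\bb y_j,\bb y_{j+1}\in\mathcal L_j$ from the hypothesis, this yields collinearity of three points, and a norm-preservation argument promotes collinearity to the desired equality of unit vectors.

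The first step is to exploit planarity: by Proposition \ref{prop:DiscreteVirtualPlanar}, the symmetric projection $P_{\bb v}$ depends only on the hyperplane $H_{i_j}$ containing the planar wall $W_{i_j}$, so we may write $P_{W_{i_j}}\equiv P_{\bb y_j}$ for any $\bb y_j\in W_{i_j}$; this map is an isometric involution fixing $H_{i_j}$. The central observation is then $P_{\bb y_j}(\mathcal L_{j-1})=\mathcal L_j$. I would check this directly from the endpoint formulas (\ref{eq:lines1})--(\ref{eq:lines2}): applying $P_{\bb y_j}$ to the source-side endpoint $P_{\bb y_{j-1}}\circ\cdots\circ P_{\bb y_1}(\bb s)$ of $\mathcal L_{j-1}$ produces $P_{\bb y_j}\circ\cdots\circ P_{\bb y_1}(\bb s)$, the source-side endpoint of $\mathcal L_j$; applying it to the receiver-side endpoint $P_{\bb y_j}\circ\cdots\circ P_{\bb y_k}(\bb r)$ of $\mathcal L_{j-1}$ cancels the leading $P_{\bb y_j}$ and yields the receiver-side endpoint of $\mathcal L_j$. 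Since reflections send lines to lines, the identity follows. The boundary cases $j=1$ and $j=k$ are handled analogously by reading off the explicit endpoints of $\mathcal L_0$ and $\mathcal L_k$.

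Next, since $\bb y_{j-1}\in\mathcal L_{j-1}$ (by the hypothesis at index $j-1$, or, for $j=1$, because $\bb s$ is a defining endpoint of $\mathcal L_k$), we obtain $P_{\bb y_j}(\bb y_{j-1})\in\mathcal L_j$; together with $\bb y_j,\bb y_{j+1}\in\mathcal L_j$ from the hypothesis, the three points are collinear. Definition \ref{def:VirtualSourceLocation} immediately yields $\lVert\bb y_j-P_{\bb y_j}(\bb y_{j-1})\rVert=\lVert\bb y_j-\bb y_{j-1}\rVert$, nonzero by the distinctness requirement of Definition \ref{def:ReflectionPath}. Thus the two unit vectors in (\ref{eq:LawReflection}) are parallel.

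The subtle remaining point, which I expect to be the main obstacle, is showing the two unit vectors are equal rather than antiparallel. The hypothesis $W_{i_j}\cap\mathcal L_{j-1}\cap\mathcal L_j=\{\bb y_j\}$ forces $\mathcal L_j$ to cross $H_{i_j}$ transversely at $\bb y_j$, so $P_{\bb y_j}(\bb y_{j-1})$ sits on the opposite side of $H_{i_j}$ from $\bb y_{j-1}$. Meanwhile $\bb y_{j+1}$ lies on the same side of $H_{i_j}$ as $\bb y_{j-1}$, since the segment $[\bb y_j,\bb y_{j+1}]$ continues $\mathcal L_j$ from $\bb y_j$ toward the next wall $W_{i_{j+1}}$. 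This places $P_{\bb y_j}(\bb y_{j-1})$ and $\bb y_{j+1}$ on opposite rays from $\bb y_j$ along $\mathcal L_j$, giving the sign required for equality in (\ref{eq:LawReflection}) and completing the verification.
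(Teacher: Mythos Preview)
Your approach is essentially that of the paper: the paper's proof simply asserts $\{\bb y_j, \bb y_{j+1}, P_{\bb y_j}(\bb y_{j-1})\}\subset \mathcal L_j$ and reads off the validity equation directly from this collinearity. Your argument fills in the justification for $P_{\bb y_j}(\bb y_{j-1})\in\mathcal L_j$ via the reflection identity $P_{\bb y_j}(\mathcal L_{j-1})=\mathcal L_j$, and you also address the sign of the unit vectors, a point the paper leaves implicit.
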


\section{Hausdorff Measures}
\label{app:HMeasures}
% {\color{olive}In the following lemma, we use a particular partition of a curve in order to motivate the choice of the measure in (\ref{eq:measure}). It is straightforward to see though that it is valid for any partition.
% \begin{lemma}
% \label{lemma:DiracMeasures}
% Let $g:\mathbb{R}^N\to\mathbb{C}$ be a bounded function continuous $\lambda_{\mathbb{R}^N}$-almost everywhere. If $\gamma:[0,1]\to C\subset\mathbb{R}^N$ is a smooth curve $(\lambda_{\mathbb{R}^N}$, a.e.), then the measure defined for $\Gamma\in\mathfrak{B}_N$ by
% \begin{equation}
% \mu_M(\Gamma):=\sum_{m=1}^{M-1}\Big|\Big|\gamma\Big(\frac{m+1}{M}\Big)-\gamma\Big(\frac{m}{M}\Big)\Big|\Big|\delta_{\gamma(\frac{m}{M})}(\Gamma)
% \end{equation}
% converges weakly to $\mathbf{H}^1_C$ as $M\to\infty$ where $\mathbf{H}^1_C=\mathbf{H}^1(\Gamma\cap C)$.
% \begin{proof}
% It is straightforward to see that
% \begin{multline}
% \int_{\mathbb{R}^N}gd\mu_M=\sum_{m=1}^{M}g\Big(\gamma\Big(\frac{m}{M}\Big)\Big)\cdot\Big|\Big|\gamma\Big(\frac{m+1}{M}\Big)-\gamma\Big(\frac{m}{M}\Big)\Big|\Big|\\
% \to\int_{[0,1]}g(\gamma(x))\gamma'(x)\mathrm{d}x=\int_{C}g\mathrm{d}\mathbf{H}^1=\int_{\mathbb{R}^N}g\mathrm{d}\mathbf{H}^1_C
% \end{multline}
% as $M\to\infty$ where the Riemann sum converges to a Riemann integral by Lebesgue's criterion for Riemann integrability \cite[Theorem 7.48]{Apostol:1974aa}. Thus, $\mu_M\Rightarrow\mathbf{H}^1_C$, namely $\mu_M$ converges weakly to $\mathbf{H}^1_C$.}
% \end{proof}
% \end{lemma}
We introduce here some of the background regarding Hausdorff measures. Additionally, in Proposition \ref{prop:DiracMeasures}, we prove a result that illustrates Hausdorff measures as a limiting case of Dirac measures. This is particular insightful when considering denser and denser arrangements of virtual point sources, thus leading to 
virtual sources disposed in a continuous manner through space as considered in this paper.

Let us start by defining the radius of a set $E\subset\mathbb{R}^N$ as
\begin{equation}
\mathrm{rad}(E):=\sup\Big\{\frac{||x-y||}{2}:x,y\in E\Big\}
\end{equation}
with the understanding that $\mathrm{rad}(\emptyset):=0$. For $\mathcal{C}\subseteq\mathcal{P}(\mathbb{R}^N)$, where $\mathcal{P}(\mathbb{R}^N)$ denotes the power set of $\mathbb{R}^N$, we can then set
\begin{equation}
||\mathcal{C}||:=\sup\Big\{2\mathrm{rad}(C): C\in \mathcal{C}\Big\}.
\end{equation}
This last equation is also referred to as the diameter of $\mathcal{C}$. Denoting by $\Omega_N$ the volume of the unit ball in $\mathbb{R}^N$, set $\Omega_s=(1-s)\Omega_N+s\Omega_{N+1}$ for $s\in[N,N+1]$ and $\Omega_0=1$.

Finally, for $\delta>0$ and $s\in[0,\infty)$, the Hausdorff measure is defined as the following limit \cite[Section 8.3.3]{Stroock:2011aa}
\begin{equation}
\mathbf{H}^s(\Gamma)=\lim_{\delta\searrow0}\mathbf{H}^s_\delta(\Gamma)
\end{equation}
where
\begin{multline}
\mathbf{H}^s_\delta(\Gamma):=\inf\Big\{\sum_{C\in\mathcal{C}}\Omega_s\mathrm{rad}(C)^s:\ \mathcal{C}\textrm{ a countable}\\
\textrm{cover of }\Gamma\textrm{ with }||\mathcal{C}||<\delta\Big\}
\end{multline}
It is well known that the restriction of $\mathbf{H}^s$ to $\mathfrak{B}_N$ is a Borel measure \cite[Theorem 8.3.10]{Stroock:2011aa}. Note that $\mathbf{H}^0$ corresponds to the counting measure. The Hausdorff dimension is then defined as
\begin{equation}
\label{eq:HausDim}
\mathrm{Hdim}(\Gamma):=\inf\{s\geq0:\mathbf{H}^s(\Gamma)=0\}
\end{equation}
for any $\Gamma\in\mathfrak{B}_N$.

We prove in the next proposition that a linear combination of appropriately weighted Dirac measures defined for points that get closer and closer on parametrized subsets of the Euclidean space converges weakly to the Hausdorff measure of those parametrized subsets (for example, see \cite{Bogachev:2018aa} for a definition of weak convergence). In particular, we will be defining Dirac measures for a set of points that result from the intersection of a lattice and an open set. This particular choice is convenient to construct Riemann sums on, for example, lines, curves, surfaces, or volumes. Note that different choices are also possible.

Let us denote a lattice in $\mathbb{R}^n$ and parametrized by $\Delta\geq0$ as follows
\begin{equation}
\Lambda_\Delta^n:=\{\mathbf{x}\in\mathbb{R}^n:\mathbf{x}=\sum_{i=1}^na_i\mathbf{e}_i,\ a_i\Delta\in\mathbb{Z}\}.
\end{equation}
where $\{\mathbf{e}_1,\ldots,\mathbf{e}_n\}$ is the canonical basis in $\mathbb{R}^n$. We also need to introduce the Jacobian, denoted by $J\Phi$ for a function $\Phi$. Additionally, for integers $n,p\geq1$ and a linear transformation ${T:\mathbb{R}^p\to\mathbb{R}^n}$, set $\mathcal{J}(T)=\sqrt{\mathrm{det}(T^*T)}$ where $T^*$ is the adjoint of $T$.

Let $C^k(U;\mathbb{R}^N)$ be the space of $k$-times continuously differentiable functions from $U$ into $\mathbb{R}^N$ and $L^1(G;\mu)$ the Lebesgue space of functions $f$ from $G$ into $\mathbb{R}$ for which $|f|$ is $\mu$-integrable. Now, let $V_p(\epsilon)$ be the $p$-dimensional volume of an Euclidean ball of radius $\epsilon>0$.

\begin{proposition}
\label{prop:DiracMeasures}
Consider the open set $U\subseteq\mathbb{R}^p$ for $p\geq1$, and assume further that $\Phi\in C^1(U;\mathbb{R}^N)$ is an injective map such that $\mathrm{rank}(J\Phi(\mathbf{x}))=p$ for every $\mathbf{x}\in U$. Then, the measure defined by
\begin{equation}
\mu_M(\Gamma):=\sum_{\mathbf{x}\in U\cap\Lambda_{M}^N}\delta_{\Phi(\mathbf{x})}(\Gamma)V_p(\epsilon_{\mathbf{x}}),\ \Gamma\in\mathfrak{B}_N
\end{equation}
where 
\begin{equation}
\epsilon_\mathbf{x}:=\min_{\mathbf{x}^\prime\in\Phi(\Lambda_{M}^N)\setminus \mathbf{x}}||\Phi(\mathbf{x}^\prime)-\mathbf{x}||
\end{equation}
converges weakly to $\mathbf{H}^p_{\Phi(U)}$ as $M\to\infty$ for every bounded function ${g\in C^1(\mathbb{R}^N;\mathbb{R})\cap L^1(\Phi(U),\lambda_{\mathbb{R}^N})}$ where $\mathbf{H}^p_{\Phi(U)}={\mathbf{H}^p(\Gamma\cap \Phi(U))}$.
\begin{proof}
It is straightforward to see that
\begin{equation}
\begin{split}
\int_{\mathbb{R}^N}gd\mu_M&=\sum_{b\in U\cap\Lambda_{M}^N}g(\Phi(b))V_p(\epsilon_b)\nonumber\\
&\to\int_{U}g(\Phi(x))\mathcal{J}(J\Phi(x))\mathrm{d}\lambda_{\mathbb{R}^p}\nonumber\\
&=\int_{\Phi(U)}g\mathrm{d}\mathbf{H}^p=\int_{\mathbb{R}^N}g\mathrm{d}\mathbf{H}^p_{\Phi(U)}
\end{split}
\end{equation}
as $M\to\infty$ where the Riemann sum converges to a Riemann integral by Lebesgue's criterion for Riemann integrability \cite[Theorem 7.48]{Apostol:1974aa} and the equality follows from \cite[Theorem 11.25]{Folland:1999aa}. Thus, $\mu_M\Rightarrow\mathbf{H}^p_{\Phi(U)}$, namely $\mu_M$ converges weakly to $\mathbf{H}^p_{\Phi(U)}$.
\end{proof}
\end{proposition}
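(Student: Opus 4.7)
The plan is to verify weak convergence by checking that $\int_{\mathbb{R}^N} g\,d\mu_M \to \int_{\mathbb{R}^N} g\,d\mathbf{H}^p_{\Phi(U)}$ for every admissible test function $g$. The argument breaks naturally into three steps: (i) rewrite $\int g\,d\mu_M$ as a sum over the sampled parameter points in $U$; (ii) identify that sum as a Riemann sum that converges, as $M\to\infty$, to a Lebesgue integral on $U$ weighted by the Jacobian factor $\mathcal{J}(J\Phi)$; and (iii) invoke the area formula to re-express that limiting integral as a Hausdorff integral on $\Phi(U)$.

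For step (i), unfolding the Dirac masses in the definition of $\mu_M$ immediately yields
\begin{equation*}
\int_{\mathbb{R}^N} g\,d\mu_M \;=\; \sum_{\mathbf{x}\in U\cap\Lambda_{M}^p} g(\Phi(\mathbf{x}))\,V_p(\epsilon_{\mathbf{x}}),
\end{equation*}
where I read the lattice as living in the parameter space $\mathbb{R}^p$ with spacing $1/M$. For step (ii), I would show that $V_p(\epsilon_{\mathbf{x}})$ is asymptotic to $\mathcal{J}(J\Phi(\mathbf{x}))\,M^{-p}$ uniformly on compact subsets of $U$. This uses the first-order Taylor expansion $\Phi(\mathbf{x}')-\Phi(\mathbf{x})=J\Phi(\mathbf{x})(\mathbf{x}'-\mathbf{x})+o(\|\mathbf{x}'-\mathbf{x}\|)$ together with the rank-$p$ assumption: the linearisation is injective, its image of a unit lattice cell is a nondegenerate $p$-parallelotope of $p$-content exactly $\mathcal{J}(J\Phi(\mathbf{x}))$, and so $\epsilon_{\mathbf{x}}$ and the associated ball volume $V_p(\epsilon_{\mathbf{x}})$ recover this local volume element to leading order. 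Continuity of $g$ and $\mathcal{J}(J\Phi)$ on $U$ together with boundedness of $g$ then let one apply Lebesgue's criterion for Riemann integrability to conclude
\begin{equation*}
\sum_{\mathbf{x}\in U\cap\Lambda_{M}^p} g(\Phi(\mathbf{x}))\,V_p(\epsilon_{\mathbf{x}}) \;\longrightarrow\; \int_U g(\Phi(\mathbf{x}))\,\mathcal{J}(J\Phi(\mathbf{x}))\,d\lambda_{\mathbb{R}^p}(\mathbf{x}).
\end{equation*}
For step (iii), the area formula for injective $C^1$ maps of full rank (e.g.\ Folland, Theorem 11.25) rewrites this last integral as $\int_{\Phi(U)} g\,d\mathbf{H}^p = \int_{\mathbb{R}^N} g\,d\mathbf{H}^p_{\Phi(U)}$, which is the desired limit.

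The hard part is step (ii). The nearest-neighbor distance $\epsilon_{\mathbf{x}}$ is controlled by the \emph{columns} of $J\Phi(\mathbf{x})$ rather than by its singular values, so one has to check carefully that the combinatorial choice of nearest neighbour in $\Phi(\Lambda_M^p)$ aligns, in the Riemann-sum sense, with the $p$-volume measured by $\mathcal{J}(J\Phi)$. A secondary nuisance concerns lattice points $\mathbf{x}\in U$ whose nearest neighbours in $\mathbb{R}^p$ fall outside $U$—there $\epsilon_{\mathbf{x}}$ can be atypical—and points near $\partial U$ more generally; I would handle this by exhausting $U$ by compact subsets $K_n\nearrow U$ and using $L^1$-integrability of $g$ on $\Phi(U)$ to make the remaining tail $\int_{\Phi(U)\setminus \Phi(K_n)} g\,d\mathbf{H}^p$ arbitrarily small. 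Finally, I would note that the lattice appearing in the statement must be read as $\Lambda_M^p$ rather than $\Lambda_M^N$ in order for $U\cap \Lambda_M^p$ to sample the $p$-dimensional parameter domain as required.
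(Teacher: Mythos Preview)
Your three-step outline---unfold the Dirac masses, recognise a Riemann sum converging to $\int_U g(\Phi(x))\,\mathcal{J}(J\Phi(x))\,d\lambda_{\mathbb{R}^p}$ via Lebesgue's criterion, then apply the area formula (Folland, Theorem~11.25)---is exactly the argument the paper gives, with the same two references. The paper's proof is in fact terser than yours: it simply asserts the Riemann-sum convergence in step~(ii) without the Taylor-expansion justification or the compact-exhaustion argument you sketch, and it does not comment on the $\Lambda_M^N$ versus $\Lambda_M^p$ issue you (correctly) flag.
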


\begin{example}
\label{ex:interval}
A simple example to illustrate the previous proposition is to consider, as a parametrized curve, the segment $(0,1)$. In particular, this curve is parametrized by $\Phi:(a,b)\to(0,1)\subset\mathbb{R}^N$ for $a,b\in\mathbb{R}$ given by $\Phi(\theta)=(\theta-a)/(b-a)$ assuming $b\neq0$. It is clear that $\Phi$ is an injective transformation in $C^1((a,b);\mathbb{R}^N)$. According to Proposition \ref{prop:DiracMeasures}, if we consider the measure $\mu_{M}(\Gamma):=\frac{1}{M}\sum_{m=1}^{M-1}\delta_{\frac{m}{M}}(\Gamma)$
for $M\geq1$ and $\Gamma\in\mathfrak{B}_N$, we then have that 
\begin{equation}
\int_{\mathbb{R}^N}gd\mu_M\to\int_{\mathbb{R}^N}g\mathrm{d}\mathbf{H}^1_{(0,1)}
\end{equation}
as $M\to\infty$ for every bounded function $g\in C^1(\mathbb{R}^N;\mathbb{R})\cap L^1((0,1),\lambda_{\mathbb{R}^N})$.
\end{example}

% Can use something like this to put references on a page
% by themselves when using endfloat and the captionsoff option.
\ifCLASSOPTIONcaptionsoff
  \newpage
\fi
\bibliographystyle{IEEEtran}
% argument is your BibTeX string definitions and bibliography database(s)
\bibliography{./ms.bbl}

\end{document}